\newcommand{\etal}{\textit{et al}.}
\begin{document}
%
\title{Hidden Community Detection on Two-layer Stochastic Models: a Theoretical Perspective}
\titlerunning{Hidden Community Detection on Two-layer Stochastic Models}
%
\author{Jialu Bao\inst{1} \thanks{Portion of the work was done while at Cornell University} \and
Kun He \inst{2} \thanks{Corresponding author. Email: brooklet60@hust.edu.cn} \and
Xiaodong Xin \inst{2} \and
Bart Selman\inst{3} \and 
John E. Hopcroft\inst{3}}
\authorrunning{J. Bao, K. He, et al.}
%
\institute{Department of Computer Science, University of Wisconsin-Madison,\\ Madison, WI 50706, USA 
\and
School of Computer Science and Technology, Huazhong University of Science and Technology, Wuhan 430074, China
\and
Department of Computer Science, Cornell University, Ithaca, NY 14853
}
\maketitle              
\begin{abstract}
Hidden community is a new graph-theoretical concept recently proposed by~\cite{he18}, in which the authors also propose a meta-approach called HICODE (Hidden Community Detection) for detecting hidden communities. HICODE is demonstrated through experiments that it is able to uncover previously overshadowed weak layers and uncover both weak and strong layers at a higher accuracy. However, the authors provide no theoretical guarantee for the performance. In this work, we focus on theoretical analysis of HICODE on synthetic two-layer networks, where layers are independent of each other and each layer is generated by stochastic block model.  We bridge their gap through two-layer stochastic block model networks in the following aspects: 1) we show that partitions that locally optimize modularity correspond to grounded layers, indicating modularity-optimizing algorithms can detect strong layers; 2) we prove that when reducing found layers, HICODE increases absolute modularities of all unreduced layers, showing its layer reduction step makes weak layers more detectable. Our work builds a solid theoretical base for HICODE, demonstrating that it is promising in uncovering both weak and strong layers of communities in two-layer networks.

\keywords{Hidden community \and multi-layer stochastic block model \and modularity optimization \and social network}
\end{abstract}
%
\section{Introduction}
Community detection problem has occurred in a wide range of domains, from social network analysis to biological protein-protein interactions, and numerous algorithms have been proposed, based on the assumption that nodes in the same community are more likely to connect with each other. 
While many real-world social networks satisfy the assumption, their communities can overlap in interesting ways: communities based on schools can overlap as students attend different schools; connections of crime activities often hide behind innocuous social connections; proteins serving multiple functions can belong to multiple function communities. 
In any of these networks, communities can have more structures than random overlappings. For example, communities based on schools may be divided into primary school, middle school, high school, college and graduate school layers, where each layer are approximately disjoint. This observation inspires us to model real world networks as having multiple layers. 

To simulate real-world networks, researchers also build generative models such as single-layer stochastic block model $G(n, n_1, p, q)$ ($p>q$). It can be seen as Erd\H{o}s-R\'{e}nyi model with communities---$G(n, n_1, p, 1)$ has $n$ nodes that belongs to $n_1$ disjoint blocks/communities (we use them interchangeably in the following), and any node pair internal to a community has probability $p$ to form an edge, while any node pair across two communities have $q$ probability to form an edge. 
We propose a multi-layer stochastic block model $G(n, n_1, p_1, ..., n_L, p_L)$, where each layer $l$ consists of $n_l$ disjoint communities, and communities in different layers are independent to each other. Each layer $l$ is associated with one edge probability $p_l$, determining the probability that a node pair internal to a community in that layer forms an edge. In this ideal abstraction, we assume that each node belongs to exactly one community in each layer, and an edge is generated only through that process, i.e. all edges outgoing communities of one layer are generated as internal edges in some other layers. 
Note that our model is different to the multi-layer stochastic blockmodel proposed by Paul \etal ~\cite{paul2016consistent}, where they have different types of edges, and each type of edges forms one layer of the network.

He \etal \ \cite{He15corr,he18} first introduce the concept of hidden communities, remarked as a new graph-theoretical concept~\cite{teng2016scalable}. He \etal \ propose the Hidden Community Detection (HICODE) algorithm for networks containing both strong and hidden layers of communities, where each layer consists of a set of disjoint or slightly overlapping communities. A  hidden community is a community most of whose nodes also belong to other stronger communities as measured by metrics like modularity~\cite{girvan2002community}. 
They showed through experiments that HICODE uncovers grounded communities with higher accuracy and finds hidden communities in the weak layers. However, they did not provide any theoretical support. 

In this work, we provide solid theoretical analysis that demonstrates the effectiveness of HICODE on two-layer stochastic models. 
One important step in HICODE algorithm is to reduce the strength of one partition when the partition is found to approximate one layer of communities in the network. Since communities in different layers unavoidably overlap, both internal edges and outgoing edges of remaining layers have a chance to be reduced while reducing one layer. It was unclear how the modularity of remaining layer would change. 
Through rigorous analysis of three layer weakening methods they suggested,  we prove that using any one of \textit{RemoveEdge}, \textit{ReduceEdge} and \textit{ReduceWeight} on one layer increases the modularity of the grounded partition in the unreduced layer. 
Thus, we provide evidence that HICODE's layer reduction step makes weak layers more detectable. 

In addition, through simulation, we show that on two-layer stochastic block model networks, partitions with locally maximal modularity roughly correspond to planted partitions given by grounded layers. As a result, modularity optimizing community detection algorithms such as Louvain~\cite{blondel2008fast} can approximate layers fairly accurately in a two-layer stochastic block model, even when layers are almost equally strong and non-trivially overlapped. This indicates the previous proof's assumption that one layer of communities is reduced exactly is reasonable. We also illustrate how the modularity of randomly sampled partitions change as HICODE iterates, and our plots show that not only absolute modularity but also relative modularity of unreduced layers increases as HICODE reduces one found layer.

\section{Preliminary}
In this section, we first introduce metrics that measure community partition quality. Then, we summarize important components in HICODE, the iterative meta-approach we are going to analyze, and in particular, how it reduce layers of detected communities during the iterations. Also, we define the multi-layer stochastic block model formally, and the rationale why it is a reasonable abstraction of generative processes of real world networks. 

\subsection{Modularity metric}
In determining plausible underlying communities in a network, we rely on metrics measuring quality of community partitions. Usually, nodes sharing common communities are more likely to develop connections with each other, so in single-layer networks, we expect that most edges are internal to one grounded community, instead of outgoing edges whose two endpoints belong to two communities. It thus gives rise to metrics measuring the similarity between an arbitrary partition and the grounded partition based on the fraction of internal edges to outgoing edges. One widely-used metric of this kind is ``modularity''~\cite{girvan2002community}. 
We define the modularity of one community in multi-layer networks as follows:

\begin{definition}[Modularity of a community] 
	Given a graph $G = (V, E)$ with a total of $e$ edges and multiple layers of communities, where each layer of communities partitions all nodes in the graph, for a community $i$ in layer $l$, let $e_{ll}^i$ denote $i$'s internal edges, and $e_{lout}^i$ denote the number of edges that have exactly one endpoint in community $i$. Let $d_l^i$ be the total degree of nodes in community $i$ ($d_l^i = 2 e_{ll}^{i} + e_{lout}^i$). Then the modularity of community $i$ in layer $l$ is $Q_l^i = \frac{e_{ll}^{i}}{e} - \left(\frac{d_{ll}^{i}}{2e} \right)^2$.
\end{definition}

Roughly, the higher fraction of internal edges a community has among all edges, the higher its modularity in graph, indicating that members in that community are more closely connected. 

When optimizing modularity, the algorithm concerns the modularity of a partition instead of one community. The modularity of a partition is defined as follows, which is consistent with the original definition of Girvan \etal ~\cite{girvan2002community}: 

\begin{definition}[Modularity of a partition/layer]
Given a network $G = (V,E)$ with multiple layers of communities, for any layer $l$, say $l$ partitions all the nodes into disjoint communities 
$\{1, \dots, N\}$, then the layer modularity is
$Q_l = \sum_{i=1}^N Q_l^i$.
\end{definition}
Whether in single-layer network or multi-layer ones, the ground truth community partition is expected to have high modularity when compared to other possible partitions. 

\subsection{HIdden COmmunity DEtection (HICODE) algorithm}
Informally, given a state-of-the-art community detection algorithm $\mathcal{A}$ for single layer networks, HICODE$(\mathcal{A})$ finds all layers in multi-layer networks through careful alternations of detecting the strongest layer in the remaining graph using $\mathcal{A}$ and reducing found layers on the network. 
Given a network $G = (V,E)$, He \etal ~\cite{he18} proposed three slightly different methods for reducing layers in HICODE:
\begin{enumerate}
\item \textbf{RemoveEdge: } Given one layer $l$ that partitions $G$, \textit{RemoveEdge} removes all internal edges of layer $l$ from $G$. 
\item \textbf{ReduceEdge: } Given one layer $l$ that partitions $G$, \textit{ReduceEdge} approximates the background density $q$ of edges contributed by all other layers, and then removes $1-q$ fraction of internal edges of layer $l$ from network $G$. We will detail the computation of $q$ after introducing multi-layer stochastic block model. 
\item \textbf{ReduceWeight: } This is the counterpart of ReduceEdge on weighted graphs. Given one layer $l$ that partitions network $G$, \textit{ReduceWeight} approximates the background density $q$ of edges contributed by all other layers, and then reduces the weight of all internal edges to a $q$ fraction of its original values. 
\end{enumerate}

For detailed description of HICODE, see Appendix A. 

\subsection{Multi-layer stochastic block model}
Before defining the general multi-layer Stochastic Block Model (SBM), consider the case where there is exactly two layers.

\begin{definition} [Two-layer Stochastic Block Model] 
A synthetic network $G(n, n_{1}, p_{1}, n_2, p_2 )$ generated by two-layer stochastic block model has $n$ nodes, where $n, n_{1}, n_{2} \in N^{+}, n_{1}, n_{2} \geq 3$. For $l = 1$ or $2$, layer $l$ of $G$ consists of $n_l$ planted communities of size $s_{l} = \frac{n}{n_{l}}$ with internal edge probability $p_{l} \in (0,1]$. Communities in different layers are grouped independently, so they are expected to intersect with each other by $r = \frac{n}{n_{1}n_{2}}$ nodes.
\end{definition}

Each community of layer $l$ is expected to have $p_l \cdot \frac{1}{2} s_1^l$ internal edges~\footnote{For simplicity, we allow self-loops.}. The model represents an ideal scenario when there is no noise and all outgoing edges of one layer are the result of them being internal edges of some other layers. We will detail the expected number of outgoing edges and the size of the intersection block of layers in Lemma 1 in the next section.

For example, in $G(200, 4, 5, p_1, p_2)$, layer 1 contains four communities $C_{1}^{1} = \{1,2,..., 50\}$, $C_{1}^{2} = \{51,52,..., 100\}$, $C_{1}^{3} = \{101,102,..., 150\}$, $C_{1}^{4} = \{151,152,...,$ $ 200\}$, and layer 2 contains five communities $C_{2}^{1} = \{1,6, ..., 196\}$, $C_{2}^{2} = \{2,7, ...,$ $197\}$,  $C_{2}^{3} = \{3,8, ..., 198\}$,  $C_{2}^{4} = \{4,9, ..., 199\}$, $C_{2}^{5} = \{5,10, ..., 200\}$. 
Each community is modeled as an Erd\H{o}s-R\'{e}nyi graph. 
Each $C_{1}^{i}$ in layer 1 is expected to have $0.5 \cdot 50^2 p_1$ internal edges, and each  $C_{2}^{i}$ in layer 2 are expected to have $0.5 \cdot 40^2 p_1$ internal edges.

Each community in layer 1 overlaps with each community in layer 2. Each overlap consists of 20\% of the nodes of layer 1 community and 25\% of the nodes of layer 2 community.
Fig. \ref{fig:SynAnalysis} (a) and (b) show the adjacency matrix when nodes are ordered by $[1,..., n]$ for layer 1, and $[1, 6, ..., 196,$ $2,7, ..., 197, 5,10, ..., 200]$ for layer 2, respectively (Here we set $p_1 = 0.12, p_2 = 0.10$).
Fig. \ref{fig:SynAnalysis} (c) and (d) show an enlarged block for each layer.
Edges in layer 1 are plotted in red, edges in layer 2 are plotted in blue and the intersected edges are plotted in green.

\vspace{-1.5em}
\begin{figure}[htbp!] 
  \subfigure[layer 1]{
    \begin{minipage}[b]{0.23\textwidth}
      \centering
      \includegraphics[width=0.9in]{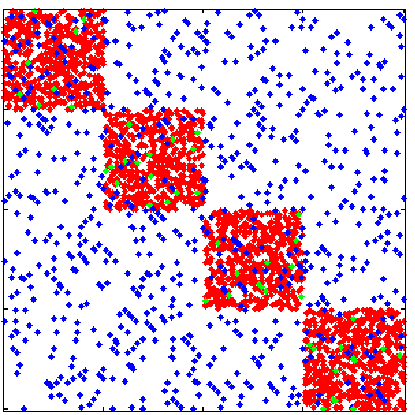}
    \end{minipage}}
  \subfigure[layer 2]{
    \begin{minipage}[b]{0.23\textwidth}
      \centering
      \includegraphics[width=0.9in]{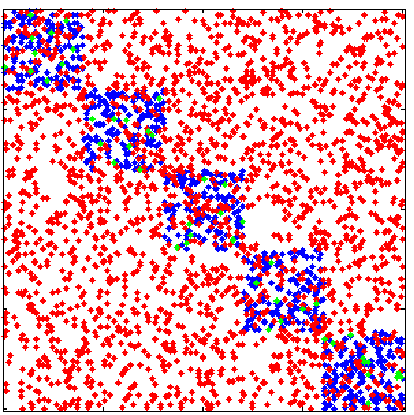}
    \end{minipage}}
  \subfigure[\small{a $L_1$ block}]{
    \begin{minipage}[b]{0.24\textwidth}
      \centering
      \includegraphics[width=0.9in]{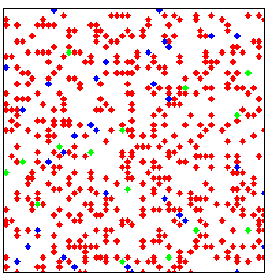}
    \end{minipage}}
  \subfigure[\small{a $L_2$ block}]{
    \begin{minipage}[b]{0.24\textwidth}
      \centering
      \includegraphics[width=0.9in]{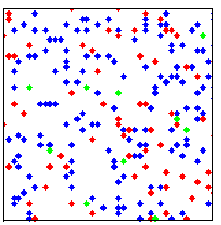}
    \end{minipage}}
  \vspace{-1.5em}
  \caption{The stochastic blocks in two layers.}
  \label{fig:SynAnalysis} 
\vspace{-1.5em}  
\end{figure}

More generally, we can define a multi-layer stochastic block model.

\begin{definition} [Multi-layer Stochastic Block Model]
A multi-layer stochastic block model $G(n, n_{1}, p_{1} , ..., n_{L}, p_{L})$ generates a network with $L$ layers, and each layer $l$ has $n_l$ communities of size $\frac{n}{n_l}$ with internal edge probability $p_l$. All layers are independent with each other. 
\end{definition}

\subsection{Background edge probability for multi-layer SBM}

Given a layer, observed edge probability within its grounded communities would be higher than its grounded edge generating probability, because other layers could also generate edge internal to this layer. When we are interested in the grounded edge generating probability of a layer, we can consider edges generated by all other layers as background noise. Since layers are independent to each other, these background noise edges are uniformly distributed among communities of layer $l$, so we can expect background noise edge probability the same on node pairs either internal to or across layer $l$ communities. Thus, the observed edge probability $\widehat{p}$ of communities in a layer $l$ equals $ p + \widehat{q} - p \cdot \widehat{q}$, where $p$ is the grounded edge generating probability of layer $l$, $\widehat{q}$ is the observed edge probability across layer 1 communities. Thus, we can estimate the actual edge probability by $p = \frac{\widehat{p} - \widehat{q}}{1-\widehat{q}}$.

\section{Theoretical analysis on two-layer SBM}

In this section, we show that on networks generated by two-layer stochastic block model, weakening one layer would not decrease the quality of communities in any other layer even when they considerably overlap with each other. We will prove on two-layer stochastic block models that absolute modularity of unreduced layer must increase after performing RemoveEdge, ReduceEdge, or ReduceWeight.
For simplicity, we make the assumption that the base algorithm can uncover a layer exactly -- every time it finds a layer to reduce, it does not make mistakes on community membership. This is a strong assumption, but later on we will justify why our result still holds if the base algorithm only approximates layers and why the base algorithm can almost always find some approximate layers.

For each community in layer $l$, let $s_l$ denote the size of each community in layer $l$, and  $m_l$ denote the number of node pairs in the community. Since we allow self-loops, $m_l = \frac{1}{2} s_l^2$. Also, with the assumption that all communities in one layer are equal sized, their expected numbers of internal (or outgoing) edges are the same. Thus, we can use $e_{ll}$, $e_{lout}$ to respectively denote the expected number of internal, outgoing edges for each community $i$ in layer $l$. Then, let $d_l = 2e_{ll} + e_{lout}$ denote the expected total degree of any community in layer $l$. 

\begin{lemma}
In the synthetic two-layer block model network $G(n, n_{1}, n_{2}, p_{1}, p_{2})$, for a given community $i$ in layer 1, the expected number of its internal edges as well as outgoing edges, and layer 1's modularity are as follows:
\begin{align}
     e_{11} &=  \left(1-\frac{1}{n_{2}}\right)m_{1}p_{1} + \frac{1}{n_{2}}m_{1}p_{12},\\
    e_{1out} &=  \frac{p_{2}}{n_{2}} s_{1}(n - s_{1}),\\
    Q_1 &= 1 - \frac{1}{n_1} - \frac{e_{1out}}{d_1},
\end{align}
where $p_{12} = p_1 + p_2 - p_1 \cdot p_2$. Symmetrically, given a community $i$ in layer 2, the expected number of its internal edges as well as outgoing edges, and layer 2's modularity are as follows: 
\begin{align}
    e_{22} &= \left(1-\frac{1}{n_{1}}\right)m_{2}p_{2} + \frac{1}{n_{1}}m_{2}p_{12},\\
    e_{2out} &=  \frac{p_1}{n_1} s_{2}(n - s_{2}),\\
    Q_2 &= 1 - \frac{1}{n_2} - \frac{e_{2out}}{d_2}.
\end{align}
\end{lemma}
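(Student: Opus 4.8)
The plan is to obtain each quantity as an expectation over the two independent edge-generating processes and then assemble the modularity directly from its definition. First I would compute $e_{11}$ by classifying the $m_1 = \frac{1}{2}s_1^2$ node pairs inside a fixed layer-1 community $i$ according to whether their two endpoints also lie in a common layer-2 community. Because the two layers are grouped independently, community $i$ is split by layer 2 into $n_2$ equal blocks of size $r = s_1/n_2$; summing $\frac{1}{2}r^2$ over these blocks counts $m_1/n_2$ pairs that share a layer-2 community and leaves $m_1\left(1 - \frac{1}{n_2}\right)$ that do not. A pair of the first kind is joined whenever \emph{either} layer places an edge there, which by independence and inclusion-exclusion happens with probability $p_{12} = p_1 + p_2 - p_1 p_2$; a pair of the second kind can only be connected through layer 1, with probability $p_1$. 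Adding the two expected contributions gives the claimed formula for $e_{11}$.

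For $e_{1out}$ I would use that two nodes in distinct layer-1 communities are never joined by layer 1, so every outgoing edge of $i$ is necessarily produced by layer 2, and therefore joins two nodes that share a layer-2 community with one endpoint inside $i$ and one outside. Within each of the $n_2$ layer-2 communities exactly $r$ nodes lie in $i$ while $s_2 - r$ lie outside, contributing $r(s_2 - r)$ such crossing pairs; summing over the $n_2$ layer-2 communities and weighting by $p_2$ yields $p_2\, n_2\, r (s_2 - r)$, which simplifies to $\frac{p_2}{n_2} s_1(n - s_1)$ once $r = s_1/n_2$ and $s_2 = n/n_2$ are substituted.

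Finally, for $Q_1$ I would invoke the symmetry of the model: the $n_1$ communities of layer 1 are interchangeable, so each contributes an identical term and $Q_1 = n_1\left(\frac{e_{11}}{e} - \left(\frac{d_1}{2e}\right)^2\right)$. The missing ingredient is the total edge count $e$, which the handshaking identity supplies as $2e = \sum_i d_1^i = n_1 d_1$, i.e. $e = \frac{1}{2} n_1 d_1$. Substituting this, the internal term becomes $\frac{2 e_{11}}{d_1} = 1 - \frac{e_{1out}}{d_1}$ via $d_1 = 2 e_{11} + e_{1out}$, while the degree term collapses to $\frac{1}{n_1}$, producing $Q_1 = 1 - \frac{1}{n_1} - \frac{e_{1out}}{d_1}$. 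All layer-2 identities then follow verbatim after exchanging the roles of the two layers.

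I expect the only genuine obstacle to be the combinatorial bookkeeping of the overlap rather than any probabilistic difficulty: one has to verify that the deterministic intersection pattern partitions each layer-1 community into exactly $n_2$ blocks of size $r$ (and dually that each layer-2 community contributes $r$ nodes inside and $s_2 - r$ outside), and that the self-loop convention $m_1 = \frac{1}{2}s_1^2$ is applied consistently, since any miscount here would inject a spurious constant factor into every formula. The probabilistic content, by contrast, reduces to the single independence-plus-inclusion-exclusion step that yields $p_{12}$.
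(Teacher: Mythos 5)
Your proposal is correct and follows essentially the same route as the paper's proof: the same split of the $m_1$ internal pairs into the $\frac{1}{n_2}m_1$ overlap pairs (edge probability $p_{12}=p_1+p_2-p_1p_2$ by inclusion-exclusion) versus the rest (probability $p_1$), the same identification of outgoing edges as layer-2 internal edges crossing community $i$, and the same use of the handshake identity $e=\frac{1}{2}n_1 d_1$ with $d_1 = 2e_{11}+e_{1out}$ to collapse $Q_1$ to $1-\frac{1}{n_1}-\frac{e_{1out}}{d_1}$. Your per-layer-2-community count $n_2\,r(s_2-r)$ for the outgoing pairs is just a reorganization of the paper's ``$\frac{1}{n_2}$ of all $s_1(n-s_1)$ outgoing pairs'' and yields the identical formula.
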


For detailed proofs, see Appendix B.  

\begin{lemma}
For layer $l$ in a two-layer stochastic block model, if the layer weakening method (e.g. RemoveEdge, ReduceEdge, ReduceWeight) reduces a bigger percentage of outgoing edges than internal edges, i.e. the expected number of internal and outgoing edges after weakening, $e'_{ll}, e'_{l out}$, satisfy $\frac{e'_{l out}}{e_{l out}} < \frac{e'_{l l}}{e_{l l}}$, then the modularity of layer $l$ increases after the weakening method.
\end{lemma}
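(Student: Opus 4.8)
The plan is to reduce the claim to a single monotonicity statement about the compact modularity formula established in Lemma 1. First I would record that the formula $Q_l = 1 - \frac{1}{n_l} - \frac{e_{lout}}{d_l}$, with $d_l = 2e_{ll} + e_{lout}$, does not actually depend on the edge-generation probabilities $p_1, p_2$: it follows purely from summing $Q_l^i = \frac{e_{ll}^i}{e} - \left(\frac{d_l^i}{2e}\right)^2$ over the $n_l$ symmetric communities of layer $l$ and using $n_l d_l = 2e$. Hence the same formula applies verbatim to the weakened graph, giving $Q'_l = 1 - \frac{1}{n_l} - \frac{e'_{lout}}{d'_l}$ with $d'_l = 2e'_{ll} + e'_{lout}$, provided the weakening preserves the symmetry across the communities of layer $l$.

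The first thing to check, therefore, is that weakening keeps all communities of layer $l$ statistically identical, so that the single-community quantities $e'_{ll}$ and $e'_{lout}$ are well defined and the sum over communities collapses exactly as before. By the independence and equal-overlap structure of the two-layer model, reducing the other layer's internal edges acts uniformly across the $n_l$ communities, so each retains the same expected internal and outgoing counts; the number of communities $n_l$ is of course unchanged.

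With both modularities in the same form, the inequality $Q'_l > Q_l$ is equivalent to $\frac{e'_{lout}}{d'_l} < \frac{e_{lout}}{d_l}$. I would finish by cross-multiplication: writing $d_l = 2e_{ll} + e_{lout}$ and $d'_l = 2e'_{ll} + e'_{lout}$ and clearing denominators, the cross-terms $e'_{lout} e_{lout}$ cancel, leaving $e'_{lout} e_{ll} < e_{lout} e'_{ll}$, which after dividing by the positive quantity $e_{ll} e_{lout}$ is exactly the hypothesis $\frac{e'_{lout}}{e_{lout}} < \frac{e'_{ll}}{e_{ll}}$. Equivalently, one can phrase the whole thing through the outgoing-to-internal ratio $\rho = e_{lout}/e_{ll}$: since $\frac{e_{lout}}{d_l} = \frac{\rho}{2+\rho}$ is strictly increasing in $\rho$, and the hypothesis says the weakened ratio $\rho'$ is strictly smaller than $\rho$, the modularity strictly increases.

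I do not expect a serious obstacle here; the content is a one-line monotonicity fact. The only point that needs care is the structural claim that the compact formula for $Q_l$ survives weakening -- i.e. that it is genuinely a function of the symmetric edge counts and not of the generative probabilities -- together with the implicit positivity of $e_{ll}, e_{lout}$ that legitimizes the cross-multiplication. Both are immediate from the two-layer setup, so the argument is essentially just the algebraic reduction above.
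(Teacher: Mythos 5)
Your proposal is correct and follows essentially the same route as the paper's proof: both apply the compact formula $Q_l = 1 - \frac{1}{n_l} - \frac{e_{lout}}{d_l}$ from Lemma 1 to the graph before and after weakening, and reduce the claim to the equivalence between the hypothesis $\frac{e'_{lout}}{e_{lout}} < \frac{e'_{ll}}{e_{ll}}$ and the decrease of the ratio $\frac{e_{lout}}{d_l}$. Your algebra via direct cross-multiplication is a minor variant of the paper's chain of equivalences (and in fact handles the RemoveEdge case $e'_{lout}=0$ slightly more cleanly, since the paper's intermediate step $\frac{2e'_{ll}}{e'_{lout}}$ implicitly divides by $e'_{lout}$), and your explicit check that the formula survives weakening because it depends only on the symmetric edge counts, not on $p_1,p_2$, is a point the paper leaves tacit.
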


For detailed proofs, see Appendix B. 

For a synthetic stochastic block model network $G$ with set of layers $\mathcal{L}$, let $S_l$  be the set of edges whose underlying node pairs are only internal to layer $l \subseteq \mathcal{L}$, let $S_{l_1l_2}$ be the set of edges internal to both layers $l_1, l_2 \subseteq \mathcal{L}$.
Concretely, in the two-layer stochastic block model, $\mathcal{L} = \{ 1, 2\}$. $S_1$ is the set of edges only internal to layer 1, $S_2$ is the set of edges only internal to layer 2, and $S_{12}$ is the set of edges internal to both layer 1 and layer 2.  

\begin{lemma}
In a two-layer stochastic blockmodel network $G(n, n_1, n_2, p_1, p_2)$, before any weakening procedure. 
\begin{align*}
&e_{11} = \frac{|S_{12}| + |S_1|}{ n_1}, &e_{1out} = \frac{2}{n_1} |S_2|,\\
&e_{22} = \frac{|S_{12}| + |S_2|}{ n_2}, &e_{2out} = \frac{2}{n_2} |S_1|.
\end{align*}
\end{lemma}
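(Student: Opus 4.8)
The plan is to prove each of the four identities by a direct double-counting argument, using two facts: the ideal assumption of the model that every edge is internal to at least one layer, and the symmetry among the equal-sized communities within a layer. First I would partition the edge set of $G$ into the three classes $S_1, S_2, S_{12}$ and note these are exhaustive: an edge whose endpoints share a layer-1 community but lie in distinct layer-2 communities lies in $S_1$; symmetrically for $S_2$; an edge internal to both layers lies in $S_{12}$; and there is no fourth class of edges internal to neither layer, since in this model an edge is generated only as an internal edge of some layer.

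Next I would identify the internal and outgoing edges of layer 1 with these classes. The edges internal to layer 1 are exactly those whose two endpoints share a layer-1 community, i.e. the edges of $S_1 \cup S_{12}$, and each such edge belongs to exactly one layer-1 community. Hence summing $e_{11}^i$ over the $n_1$ communities charges each edge of $S_1 \cup S_{12}$ exactly once, for an aggregate of $|S_1| + |S_{12}|$. By contrast, the outgoing edges of layer 1 are precisely the edges crossing layer-1 community boundaries; being not internal to layer 1, such an edge must be internal to layer 2 only, i.e. it lies in $S_2$. Each edge of $S_2$ has its two endpoints in two distinct layer-1 communities, so it is counted once as an outgoing edge for each of those two communities, and summing $e_{1out}^i$ over all communities charges each edge of $S_2$ exactly twice, for an aggregate of $2|S_2|$.

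It then remains to pass from these aggregate totals to the per-community quantities $e_{11}$ and $e_{1out}$. Here I would invoke the symmetry of the generative model: since all layer-1 communities are equal-sized and the two layers are grouped independently, every community in layer 1 has the same expected internal- and outgoing-edge count, so dividing the aggregate by $n_1$ yields $e_{11} = (|S_1| + |S_{12}|)/n_1$ and $e_{1out} = 2|S_2|/n_1$. The layer-2 identities follow by exchanging the roles of the layers, a relabeling that sends $S_1 \leftrightarrow S_2$ and $n_1 \leftrightarrow n_2$ while fixing $S_{12}$.

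I expect no genuine difficulty beyond getting the counting multiplicities right: the one point requiring care is that an internal edge is charged to a single community whereas a cross-layer edge is charged to two, which is exactly why the factor of $2$ appears in the outgoing terms but not the internal ones. Once the three-way partition of the edges and the within-layer symmetry are established, the rest is bookkeeping.
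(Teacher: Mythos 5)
Your proof is correct and takes essentially the same approach as the paper's: you identify the internal edges of layer 1 with $S_1 \cup S_{12}$ (each charged to exactly one community) and the outgoing edges with $S_2$ (each charged twice, which the paper phrases equivalently through outgoing degrees contributing $2$ per edge), then divide the aggregates by $n_1$ using the equal-expected-count symmetry, with layer 2 handled by relabeling. Your explicit remark that $S_1, S_2, S_{12}$ exhaust the edge set makes the model's ideal no-noise assumption visible, but the argument is otherwise the same bookkeeping as in the paper.
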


For detailed proofs, see Appendix B.

Using the above three lemmas, we can prove the following theorems.

\begin{theorem}
\textit{For a two-layer stochastic blockmodel network $G(n,n_1,n_2,p_1,p_2)$, the modularity of a layer increases if we apply \textcolor{blue}{RemoveEdge} on communities in the other layer.}
\end{theorem}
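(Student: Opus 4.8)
The plan is to reduce the statement entirely to Lemma 2. By the symmetry of the two-layer model it suffices to track layer 1 while applying \textit{RemoveEdge} to the communities of layer 2; I then only need to verify that this operation deletes a strictly larger fraction of layer 1's outgoing edges than of its internal edges, i.e. that the weakened counts satisfy $\frac{e'_{1out}}{e_{1out}} < \frac{e'_{11}}{e_{11}}$. Once that inequality is in hand, Lemma 2 (applied with $l=1$) hands me the conclusion directly, and the symmetric computation covers \textit{RemoveEdge} on layer 1 raising the modularity of layer 2.

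The substance of the argument is to describe what \textit{RemoveEdge} on layer 2 does to the edge partition $S_1, S_2, S_{12}$ of Lemma 3. By definition the operation deletes every edge internal to layer 2, and those edges are exactly $S_2 \cup S_{12}$; hence afterward $S_1$ survives intact while $S_2$ and $S_{12}$ are emptied. The one structural fact that makes the geometry work is that every outgoing edge of a layer 1 community lies in $S_2$: an edge not internal to layer 1 must, by the model's assumption that each edge is internal to some layer, be internal to the only other layer, so it is in $S_2$. Thus removing layer 2's internal edges wipes out all of layer 1's outgoing edges while touching only the shared portion $S_{12}$ of layer 1's internal edges. Translating through Lemma 3 I would then compute $e'_{1out} = \frac{2}{n_1}|S_2'| = 0$ and $e'_{11} = \frac{|S_{12}'| + |S_1'|}{n_1} = \frac{|S_1|}{n_1}$, and compare against the originals $e_{1out} = \frac{2}{n_1}|S_2|$ and $e_{11} = \frac{|S_{12}| + |S_1|}{n_1}$ to get $\frac{e'_{1out}}{e_{1out}} = 0 < \frac{|S_1|}{|S_1| + |S_{12}|} = \frac{e'_{11}}{e_{11}}$.

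There is no deep obstacle once Lemmas 2 and 3 are available; the core is essentially bookkeeping, and the sharpest form of the hypothesis comes for free because the outgoing fraction drops all the way to zero. The only points genuinely requiring care are the non-degeneracy conditions that keep the ratios meaningful and the inequality strict: I must observe that $|S_2| > 0$, so that $e_{1out} > 0$ and the left-hand ratio is defined, and that $|S_1| > 0$, so that the right-hand ratio is strictly positive. Both hold in expectation whenever $p_1, p_2 \in (0,1]$ with nontrivial communities. I would therefore spend most of the writeup justifying the identity "outgoing edges of layer 1 $= S_2$," since that is precisely the step where the two-layer assumption is used and the step that forces $e'_{1out}=0$, making the removed percentage of outgoing edges maximal and the application of Lemma 2 immediate.
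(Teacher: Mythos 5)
Your proposal is correct and is essentially the paper's own proof: both arguments reduce the theorem to Lemma 2 via the edge-set bookkeeping of Lemma 3, noting that \emph{RemoveEdge} on one layer empties $S_{12}$ together with that layer's exclusive edge set, so the other layer's outgoing edges vanish entirely ($e'_{l\,out}=0$) while its internal edges stay positive, giving $\frac{e'_{l\,out}}{e_{l\,out}} = 0 < \frac{e'_{ll}}{e_{ll}}$ (you remove layer 2 and track layer 1, the paper does the mirror image, which is the same by symmetry). Your explicit attention to the non-degeneracy conditions $|S_1|, |S_2| > 0$ is a small point the paper leaves implicit in Lemma 2's positivity assumptions, but it does not change the route.
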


\begin{proof}
If we remove all internal edges of communities in layer $1$,  both $|S_{12}|$ and $|S_1|$ become 0, then the remaining internal edges of layer $2$ is $e'_{22} = \frac{1}{n_2}(|S_{12}|+|S_2|)=\frac{|S_2|}{n_2}> 0$. There is no outgoing edge of layer 2, so $e'_{2out} = 0$. Thus, $\frac{e'_{2 out}}{e_{2 out}}=0 < \frac{e'_{2 2}}{e_{2 2}}$, and applying Lemma 2, we have that the modularity of layer 2 after RemoveEdge on layer 1 $Q'_2 >  Q_2$. 

Similarly, the modularity of layer 1 after RemoveEdge on layer 2, $Q'_1 $, is greater than $Q_1$.   
\end{proof}

 \textcolor{blue}{RemoveEdge} not only guarantees to increase the absolute modularity of layer 2 but also guarantees that layer 2 would have higher modularity than any possible partition of $n$ nodes into $n_2$ communities in the reduced network. 
\begin{theorem}
For a two-layer stochastic blockmodel network $G(n,n_1,n_2,p_1,p_2)$, If no layer 2 community contains more than half of the total edges inside it after applying \textcolor{blue}{RemoveEdge} on layer 1, then layer 2 has the highest modularity among all possible partitions of $n$ nodes into $n_2$ communities.
\end{theorem}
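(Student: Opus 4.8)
The plan is to first determine the exact structure of the graph produced by \textcolor{blue}{RemoveEdge}, and then bound the modularity of an arbitrary $n_2$-partition against that of layer 2. Repeating the computation in the proof of Theorem 1, deleting every internal edge of layer 1 sets $|S_1|=|S_{12}|=0$, so the surviving edges are exactly those of $S_2$. Each such edge joins two nodes in a common layer-2 community but in two different layer-1 communities; in particular no surviving edge runs between two distinct layer-2 communities. Hence the reduced graph $G'$ splits as a disjoint union of the induced subgraphs $H_1,\dots,H_{n_2}$ on the layer-2 communities, layer 2 has no outgoing edges, and with $a_i$ the number of edges of $H_i$ and $e'=\sum_i a_i=|S_2|$ its modularity is
\[ Q_2' = \sum_{i=1}^{n_2}\left(\frac{a_i}{e'}-\Big(\frac{a_i}{e'}\Big)^2\right)=1-\sum_{i=1}^{n_2}\frac{a_i^2}{e'^2}. \]
For any partition $P$ into $n_2$ parts with internal-edge counts $a'_j$ and part degrees $D'_j$, the trivial facts $\sum_j a'_j\le e'$ and $\sum_j D'_j=2e'$ combined with Cauchy--Schwarz give $Q(P)\le 1-1/n_2$. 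In the expected-value regime of Lemma 1 all $a_i$ are equal, so $Q_2'=1-1/n_2$ and the proof is immediate; note also that the hypothesis then holds automatically because each community carries only a $1/n_2\le 1/3$ fraction of the edges. I therefore read the ``no community holds more than half the edges'' hypothesis as the device needed to handle \emph{unequal} $a_i$, where $Q_2'<1-1/n_2$ and this crude bound is too weak.

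For the general case I would compare $Q_2'$ and $Q(P)$ head-on. Let $b_{ji}$ denote the total $G'$-degree of the nodes in $P_j\cap H_i$, so $D'_j=\sum_i b_{ji}$, $\sum_j b_{ji}=2a_i$, and the cut-edge count is $C=e'-\sum_j a'_j$. Substituting into the modularity formula and using $4\sum_i a_i^2=\sum_i(\sum_j b_{ji})^2$, the target inequality $Q_2'\ge Q(P)$ becomes
\[ 4e'C+\sum_{j}\sum_{i\neq i'}b_{ji}b_{ji'}\;\ge\;\sum_{i}\sum_{j\neq j'}b_{ji}b_{j'i}. \]
The left-hand double sum is a \emph{merge} term, running over pairs of distinct communities placed by $P$ in one part; the right-hand double sum is a \emph{split} term, running over pairs of distinct parts across which $P$ cuts one community. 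The picture is that merging communities only inflates the degree penalty, since there are no edges between the $H_i$, whereas splitting a community can shrink its degree penalty but must pay in cut edges recorded by $C$.

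The crux --- and the step I expect to be the main obstacle --- is bounding the split term by $4e'C$ plus the merge term. This fails community-by-community: a community with internal clustering can be cut with very few cut edges while still contributing a split term of order $a_i^2$, so the inequality is simply false without using the constraint that $P$ has exactly $n_2$ parts. Because there are $n_2$ communities and $n_2$ parts, every community that $P$ splits forces a compensating merge of other communities, and the plan is to make this pairing explicit and then invoke $a_i\le e'/2$ to dominate each split by the degree penalty of the merge it forces; concretely, $a_i\le e'/2$ is exactly the inequality $a_i^2\le a_i(e'-a_i)$, which is what lets the cross term generated by separating community $i$ from the remaining $e'-a_i$ units of degree absorb its split contribution. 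Setting up this bookkeeping for an arbitrary $P$, and checking the edge cases allowed by the model (a community whose induced subgraph is disconnected, and the self-loops, both of which only perturb the counts of $a_i$ and $C$), is where the real work lies.
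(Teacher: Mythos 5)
Your structural analysis of the reduced graph, the formula $Q_2'=1-\sum_i a_i^2/e'^2$, and the algebraic reduction of the claim to
\[ 4e'C+\sum_{j}\sum_{i\neq i'}b_{ji}b_{ji'}\;\ge\;\sum_{i}\sum_{j\neq j'}b_{ji}b_{j'i} \]
are all correct, and your observation that in the expectation regime all $a_i$ are equal, so Cauchy--Schwarz gives $Q(P)\le 1-1/n_2=Q_2'$ immediately, is a clean shortcut the paper does not make. But the proposal is not a proof: the central step is explicitly deferred (``the plan is to make this pairing explicit \dots is where the real work lies''), and worse, that plan cannot be completed at the level of generality you set up. The configuration you defer as an edge case --- a community whose induced subgraph is disconnected --- is a genuine obstruction, not bookkeeping. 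Take $n_2=3$, $a_1=e'/2$ with $H_1$ a disjoint union of two pieces of $e'/4$ edges each, $a_2=3e'/8$, $a_3=e'/8$. The partition $P$ that splits $H_1$ along its components and merges $H_2$ with $H_3$ has $C=0$, merge term $8a_2a_3=3e'^2/8$, and split term $2(e'/2)^2=e'^2/2$, so your target inequality fails; directly, $Q(P)=1-\frac{(e'/2)^2+(e'/2)^2+(e')^2}{4e'^2}=\frac{5}{8}$, which exceeds $Q_2'=1-\frac{26}{64}=\frac{19}{32}$, even though no community holds more than half the edges (and by continuity the same holds with $a_1$ strictly below $e'/2$). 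So the hypothesis $a_i\le e'/2$ alone cannot dominate each split by the merge it forces; free splits of disconnected communities must be excluded by some additional structure.

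The paper takes a different route that builds in exactly the structure your general setup lacks: it transforms layer 2 into the arbitrary partition by moving nodes one at a time, \emph{stipulating} that each move decreases $e_{22}^i$ by $2$ and increases both $e_{2out}^i$ and $e_{2out}^j$ by $1$ --- i.e., every reassigned node pays cut edges --- then sets $\Delta_i=e_{22}^i-e'^i_{22}$, applies $(a+b)^2\ge a^2+b^2$, and uses $e_{22}^i\le e/2$ to show the resulting correction term $T=8\sum\Delta_i e_{22}^i-4e\sum\Delta_i-\sum(e'^i_{2out})^2-4\sum\Delta_i^2$ is nonpositive. That homogenized accounting is precisely the expectation-level regime in which all $a_i$ are equal and your Cauchy--Schwarz paragraph already finishes the theorem more cleanly. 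To repair your writeup, either restrict explicitly to that expected-value setting, or add a hypothesis ruling out cheap cuts (e.g., lower-bounding the cut cost of splitting any $H_i$); as it stands, the general combinatorial inequality you reduced the theorem to is false.
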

\begin{proof}
After applying \textcolor{blue}{RemoveEdge} on layer 1, there are no outgoing edges of any community in layer 2. It means that for any community $i$, $e_{2out}^{i}$=0 and $d^{i}_2=2e_{22}^{i}$. Thus, the modularity of layer 2 is:
\begin{align*}
	Q_2 =  &\sum_{i \in \text{layer 2}} Q_2^i 
    = \sum_{i \in \text{layer 2}}\left[\frac{e_{22}^{i}}{e}-\left(\frac{d_2^{i}}{2e}\right)^2 \right] \\
    = &\sum_{i \in \text{layer 2}}\left[\frac{4e \cdot e^{i}_{22} - (2e^{i}_{22})^2}{4e^2}\right] 
    = n_2\left(\frac{e \cdot e_{22} - (e_{22})^2}{e^2}\right). 
\end{align*}
For any one partition, we can transform layer 2 partition to it by moving a series of nodes across communities. Every time we move one node from one community $i$ to another community $j$, both $e_{2out}^{i}, e_{2out}^{j}$ will increase by 1, $e_{22}^{i}$ will decrease by 2 while $e_{22}^{j}$ remains the same. Let $e'^{i}_{2out}, e'^{i}_{22}$ denote corresponding values after all movements. The following always holds no matter how many times we move:
\begin{align*}
	2\sum_{i \in \text{layer 2}} (e^{i}_{22}-e'^{i}_{22})&=\sum_{i \in \text{layer 2}}e'^{i}_{2out}
\end{align*}
Now $Q'_2$, the modularity of the new partition after moving, is:
\begin{align*}
    Q'_2= &\sum_{i \in \text{layer 2}} \frac{e'^{i}_{22}}{e}-\left(\frac{d'^{i}_2}{2e}\right)^2 \\
    = &\sum_{i \in \text{layer 2}}\frac{4e\cdot e'^{i}_{22}}{4e^2}-\sum_{i \in \text{layer 2}}\frac{(2e'^{i}_{22}+e'^{i}_{2out})^{2}}{4e^2}.
\end{align*}
Let $e_{22}^{i}-e'^{i}_{22}=\Delta_i$. Because of $ (a+b)^2 \geq a^2 + b^2  $ for  any $a,b\geq 0$, we have:
\begin{align*}
    Q'_2 \leq&\sum_{i \in \text{layer 2}}\frac{4e\cdot e'^{i}_{22}}{4e^2}-\sum_{i \in \text{layer 2}}\frac{(2e'^{i}_{22})^{2}+(e'^{i}_{2out})^{2}}{4e^2} \\
    = &\frac{4e\cdot \sum e'^{i}_{22}-\sum 4(e'^{i}_{22})^{2}-\sum (e'^{i}_{2out})^{2}}{4e^{2}} \\
    = &\frac{4e\cdot \sum(e^{i}_{22}-\Delta_i)-\sum 4(e^{i}_{22}-\Delta_i)^{2}-\sum (e'^{i}_{2out})^{2}}{4e^2} \\
    = & Q_2+ \frac{8\sum \Delta_i e^{i}_{22}-4e\cdot\sum \Delta_i-\sum (e'^{i}_{2out})^2-4\sum \Delta_i^{2}}{4e^2}
\end{align*}
Let $T$ abbreviate $8\sum \Delta_i e^{i}_{22}-4e\cdot\sum \Delta_i-\sum (e'^{i}_{2out})^2-4\sum \Delta_i^{2}$, then $Q'_2 = Q_2 + \frac{T}{4e^2}$. When no layer 2 community contains more than half of the total edges after applying \textcolor{blue}{RemoveEdge} on layer 1, i.e.,  $e_{22}^i\leq\frac{e}{2}$, 
\begin{align*}
    T = &8\sum \Delta_i e^{i}_{22}-4e\cdot\sum \Delta_i-\sum (e'^{i}_{2out})^2-4\sum \Delta_i^{2} \\
    \leq&4e\cdot\sum \Delta_i-4e\cdot\sum \Delta_i-\sum (e'^{i}_{2out})^2-4\sum \Delta_i^{2} \leq 0.
\end{align*}
Finally, we have $Q'_2 \leq Q_2+\frac{T}{4e^2}\leq Q_2$. 
Hence, layer 2 has the highest modularity among all possible partitions of $n$ nodes into $n_2$ communities. In this way, RemoveEdge makes the unreduced layer easier for the base algorithm to detect.
\end{proof}

\begin{theorem}
For a two-layer stochastic blockmodel network $G(n,n_1,n_2,p_1,p_2)$, the modularity of a layer increases if we apply \textcolor{blue}{ReduceEdge} on all communities in the other layer.
\end{theorem}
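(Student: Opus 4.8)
The plan is to reduce Theorem 4 to Lemma 2, exactly as Theorem 1 did for RemoveEdge. By Lemma 2 it suffices to show that applying ReduceEdge to layer 1 shrinks layer 2's expected outgoing edges by a strictly larger percentage than its internal edges, i.e. $\frac{e'_{2out}}{e_{2out}} < \frac{e'_{22}}{e_{22}}$; the modularity of layer 2 then increases, and the symmetric argument handles layer 1.

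First I would pin down precisely what ReduceEdge does to each edge class. ReduceEdge on layer 1 keeps a $q$ fraction of layer 1's internal edges (removing the $1-q$ fraction), where $q \in (0,1)$ is the estimated background density. Using the decomposition of Lemma 3, layer 1's internal edges are exactly $S_1 \cup S_{12}$, so in expectation both $|S_1|$ and $|S_{12}|$ are scaled by $q$, while $|S_2|$ — the edges internal only to layer 2 and hence not internal to layer 1 — is left untouched. The \emph{crucial observation} is that the shared edges $S_{12}$, though counted as layer 2's internal edges, get reduced as part of layer 1's internal edges.

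Next I would substitute the updated edge sets back into Lemma 3 to express layer 2's post-reduction quantities as $e'_{2out} = \frac{2}{n_2} q|S_1|$ and $e'_{22} = \frac{1}{n_2}(q|S_{12}| + |S_2|)$. Dividing by their original values gives $\frac{e'_{2out}}{e_{2out}} = q$ and $\frac{e'_{22}}{e_{22}} = \frac{q|S_{12}| + |S_2|}{|S_{12}| + |S_2|}$. Since $q < 1$ and $|S_2| > 0$ (layer 2 generates internal edges not shared with layer 1 whenever $p_2 > 0$), the latter ratio strictly exceeds $q$, because $q|S_{12}| + |S_2| > q(|S_{12}| + |S_2|)$. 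This verifies the hypothesis of Lemma 2 for layer 2, and invoking Lemma 2 closes the layer-2 case; layer 1 follows by symmetry.

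I expect the only delicate point to be the bookkeeping of the shared edges $S_{12}$: these must be treated as part of layer 1's internal edges (hence scaled by $q$) while simultaneously appearing in layer 2's internal count, and one must confirm that layer 2's outgoing edges $S_1$ are fully scaled by $q$ whereas its internal edges only partially shrink. This asymmetry — outgoing edges shrink by the full factor $q$, but internal edges shrink by less because the $S_2$ portion is protected — is exactly what drives the strict inequality and makes the Lemma 2 condition hold. A secondary subtlety worth flagging is ensuring $|S_2| > 0$, which is where the assumption $p_2 > 0$ (equivalently, the model being genuinely two-layer) enters.
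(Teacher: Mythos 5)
Your proposal matches the paper's own proof essentially step for step: both use Lemma 3's decomposition into $S_1$, $S_{12}$, $S_2$, observe that ReduceEdge on layer 1 scales $S_1$ and $S_{12}$ by the kept fraction while leaving $S_2$ untouched, deduce $\frac{e'_{2out}}{e_{2out}} = q < \frac{e'_{22}}{e_{22}}$ from $q|S_{12}| + |S_2| > q\left(|S_{12}| + |S_2|\right)$, and then invoke Lemma 2. Your explicit flagging of $|S_2| > 0$ as the source of strictness is a condition the paper uses only implicitly, and your formula $e'_{2out} = \frac{2}{n_2} q |S_1|$ silently corrects a typo in the paper's proof, which writes $\frac{2}{n_1}|S_1| \cdot q'_1$ despite Lemma 3 giving $e_{2out} = \frac{2}{n_2}|S_1|$.
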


\begin{proof}
In ReduceEdge of layer 1, we keep edges in the given community with probability $q'_{1} = \frac{1-\widehat{p}}{1-\widehat{q}}$, where $\widehat{p}$ is the observed edge probability within the detected community and $\widehat{q}$ is the observed background noise.

ReduceEdge on layer 1 would only keep $q'_{1}$ fraction of edges in $S_{12}$ and $S_1$, so after ReduceEdge,  
\begin{align*}
	e'_{22} &= \frac{1}{n_2}(|S_2| + |S_{12}| \cdot q'_{1}) 
    > \frac{1}{n_2}(|S_2| + |S_{12}| ) \cdot q'_{1} 
    = e_{22} \cdot q'_{1},\\
    e'_{2out} &=  \frac{2}{n_1}|S_1| \cdot q'_{1}
    = e_{2out} \cdot q'_{1}.
\end{align*}
Thus, $\frac{e'_{2 out}}{e_{2 out}} < \frac{e'_{2 2}}{e_{2 2}}$, and Lemma 2 indicates that $Q_2 < Q'_2$. Similarly, for the modularity of layer 1 after ReduceEdge on layer 1, $Q'_1 > Q_1$.
\end{proof}

\begin{theorem}
For a synthetic two-layer block model network $G(n,n_1,n_2,p_1,p_2)$, the modularity of a layer increases if we apply \textcolor{blue}{ReduceWeight} on all communities in the other layer.
\end{theorem}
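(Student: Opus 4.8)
The plan is to mirror the proof of Theorem 4, observing that \textit{ReduceWeight} is simply the deterministic, weighted counterpart of \textit{ReduceEdge}: rather than keeping each internal edge of layer 1 independently with probability $q'_1 = \frac{1-\widehat{p}}{1-\widehat{q}}$, it scales the weight of every internal edge of layer 1 by the same factor $q'_1$. Because the internal edges of layer 1 are exactly the edges in $S_1 \cup S_{12}$, after \textit{ReduceWeight} the edges in $S_{12}$ and $S_1$ each carry weight $q'_1$ while the edges in $S_2$ retain unit weight. Using the decomposition of Lemma 3, but now summing edge weights rather than edge counts, I would compute the weighted internal and outgoing totals of a layer-2 community, obtaining $e'_{22} = \frac{1}{n_2}(|S_2| + q'_1|S_{12}|)$ and $e'_{2out} = \frac{2}{n_1}q'_1|S_1|$.

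First I would note that $\frac{e'_{2out}}{e_{2out}} = q'_1$, since every outgoing edge of layer 2 lies in $S_1$ and is scaled uniformly, whereas $\frac{e'_{22}}{e_{22}} = \frac{|S_2| + q'_1|S_{12}|}{|S_2| + |S_{12}|} > q'_1$ because the $S_2$ portion of the internal weight is left untouched. This gives exactly the ratio inequality $\frac{e'_{2out}}{e_{2out}} < \frac{e'_{22}}{e_{22}}$ required by Lemma 2. Invoking Lemma 2 then yields $Q'_2 > Q_2$, and the symmetric computation (scaling layer-2 internal edges and examining layer 1) gives $Q'_1 > Q_1$.

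The one point that needs care, and the main obstacle, is that Lemma 2 was phrased for expected edge \emph{counts}, while \textit{ReduceWeight} produces fractional edge \emph{weights}. I would therefore first check that the modularity definition and Lemma 2 extend verbatim to weighted graphs, by replacing the edge count $e$ with the total edge weight $w$ and each $e_{ll}^i$, $d_l^i$ by their weighted analogues. Since the modularity formula $Q_l^i = \frac{e_{ll}^i}{e} - \left(\frac{d_l^i}{2e}\right)^2$ is structurally identical under this substitution, and Lemma 2's argument depends only on the ratio condition together with this formula, the extension is immediate and requires no new inequality. In fact, this also explains why \textit{ReduceWeight} and \textit{ReduceEdge} admit essentially the same proof: in expectation they rescale the internal and outgoing totals of layer 2 by identical factors, so the deterministic weighted version inherits the guarantee of the probabilistic edge-removal version.
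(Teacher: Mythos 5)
Your proof is correct and takes essentially the same route as the paper's: both arguments show that ReduceWeight scales every outgoing edge of a layer-2 community (all of which lie in $S_1$, internal to layer 1) by exactly $q'_1$, while the internal weight shrinks by strictly less because the $S_2$ edges are untouched, and then invoke Lemma 2 via the ratio condition $\frac{e'_{2out}}{e_{2out}} < \frac{e'_{22}}{e_{22}}$ --- the paper merely phrases this with per-community weighted sums $\frac{1}{2}\sum w_{uv}A_{uv}$ instead of your Lemma-3 unit-weight counts, and your explicit check that Lemma 2 extends verbatim to weighted totals makes precise a step the paper leaves implicit. Two notational slips are immaterial: your $e'_{2out} = \frac{2}{n_1}q'_1|S_1|$ should read $\frac{2}{n_2}q'_1|S_1|$ by Lemma 3 (the paper's own Theorem 3 proof contains the same typo), and the paper states the ReduceWeight factor as $1-\frac{1-\widehat{p}}{1-\widehat{q}}$ rather than your $\frac{1-\widehat{p}}{1-\widehat{q}}$, but the argument only requires a uniform scaling factor in $(0,1)$.
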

\begin{proof}
According to~\cite{he18}, ReduceWeight on layer 1 multiplies the weight of edges in layer 1 community by $q'_{1} = 1- \frac{1-\widehat{p}}{1-\widehat{q}}$ percent.
In weighted network, the weight sum of internal edges of a community $i$ in layer 2 is $e_{22} = \frac{1}{2} \sum_{u,v \in i} w_{uv} \cdot A_{uv}$ where $w_{uv}$ is the weight of edge $(u,v)$. By construction, ReduceWeight on layer 1 reduces weight of all edges in $S_{12}$ or $S_1$, but does not change weight of edges in $S_2$. Thus,
\begin{align*}
	e'^{i}_{22} &= \frac{1}{2} \sum_{ u, v \in i,\ (u,v) \in S_{12}} w_{uv} \cdot A_{uv} \cdot q'_{1} +  \frac{1}{2} \sum_{ u, v \in i,\ (u,v)\in S_2} w_{uv} \cdot A_{uv}\\
    &> \left( \frac{1}{2} \sum_{ u, v \in i,\ (u,v) \in S_{12}} w_{uv} \cdot A_{uv}  +  \frac{1}{2} \sum_{ u, v \in i,\ (u,v) \in S_2} w_{uv} \cdot A_{uv} \right) \cdot q'_{1} \\
    &= e_{22}^{i} \cdot q'_{1}\\
	e_{2out}^{i}  &= \frac{1}{2} \sum_{u \in i, v \notin i} w_{uv} A_{uv} \\
	e'^{i}_{2out}
    &= \frac{1}{2} \sum_{u \in i, v \notin i} w_{uv} A_{uv} \cdot q'_{1}
    = e_{2out}^{i} \cdot q'_{1}
\end{align*}
Thus, $\frac{e'_{2 out}}{e_{2 out}} < \frac{e'_{2 2}}{e_{2 2}}$, and combined with Lemma 2, this proves that $Q'_2 > Q_2$, the modularity increases after ReduceWeight. 

Similarly, the modularity of layer 1 after RemoveEdge on layer 1, $Q'_1 > Q_1$.   
\end{proof}

The analysis shows that weakening one layer with any one of the methods (RemoveEdge, ReduceEdge, ReduceWeight) increases the modularity of the other layer. These results follow naturally from Lemma 2, which is in some way a stronger claim that the modularity of the remaining layer increases as long as a larger percentage of outgoing edges is reduced than internal edges.

\section{Simulation of Relative Modularity}
To show whether reducing layers makes other layers more detectable when running HICODE, we simulate how grounded layers' relative modularity changes as the weakening method iterates on two-layer stochastic block models, and compare the grounded layers' modularity value with other partitions' modularity values. The number of possible partitions of $n$ nodes is exponential, so it would be computationally unrealistic just to enumerate them, let alone calculate modularity for all of them. So we employ sampling of partitions. We calculate modularity for all sampled partitions and plot them on a 2-dimensional plane based on their similarities with the grounded layer 1 and layer 2, and show the modularity values through the colormap with nearest interpolation.

\subsection{Sampling method}
We sample 2000 partitions similar to layer 1 (or 2) by starting from layer 1 (or 2), and then exchange a pair of nodes or change the membership of one node for $k = 1, ..., 500$ times. We also include 1200 partitions that mixed layer 1 and layer 2 by having $k$ randomly selected nodes getting assigned to their communities in layer 1 and the rest $200 - k $ nodes getting assigned to their communities in layer 2. As planted communities in different layers are independent, this sampling method gives a wide range of partitions while being relatively fast. 
To measure the similarity between two partitions, we adapt normalized mutual information (NMI) ~\cite{mcdaid2011normalized} for overlapping communities (The definition of NMI is in Appendix C.). 
Partitions of nodes are inherently high-dimensional. To place them on 2-dimensional plane for the plotting purpose, we use its NMI similarity with layer 1 as the $x$-coordinate, and NMI similarity with layer 2 as the $y$-coordinate.  

At each iteration, We use the modularity optimization based fast community detection algorithm~\cite{blondel2008fast} as the base algorithm to uncover a single layer of communities. 

\subsection{Simulation on ReduceEdge}
Fig. \ref{fig:twolayerblock} presents the simulated results on a two-layer block model $G(600, 15, $ $12, 0.1, 0.12)$ using ReduceEdge as the weakening method. 
In this network, layer 2 is the dominant layer (communities are bigger and denser) and layer 1 is the hidden layer. The modularity of layer 2 is 0.546, while the modularity of layer 1 is 0.398. 
We plot the modularity of the estimated layer and other sampled partitions at different iterations of HICODE. On each subfigure, the dark red cross sign denotes where the estimated layer projects on the 2-dimensional plane. Simulations using RemoveEdge and ReduceWeight yield similar results. See their plots in Appendix C. 

\begin{figure}[htbp]
	\centering
	\includegraphics[width = \textwidth]{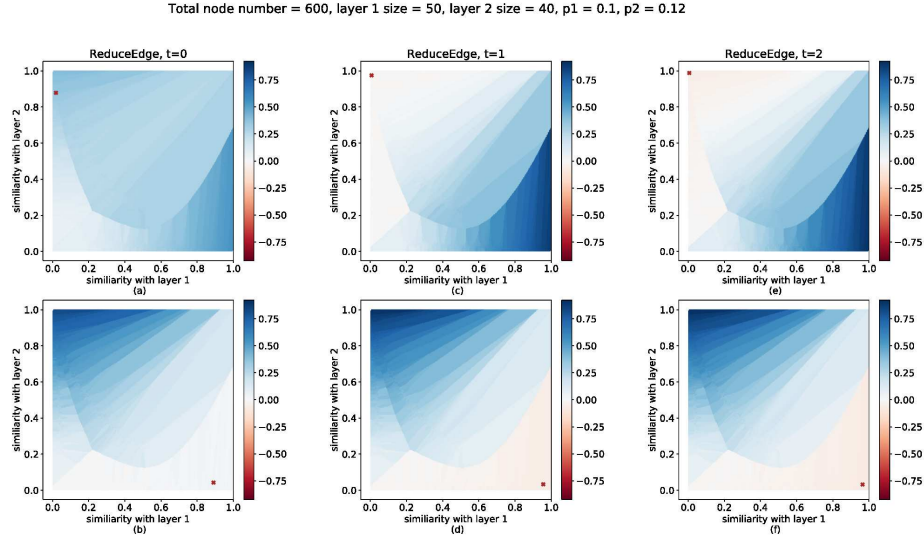}
    \vspace{-1em}
    \caption{Simulation results of ReduceEdge on $G(600, 15, 12, 0.1, 0.12)$.}
    \label{fig:twolayerblock}
\end{figure}

\vspace{-1.5em}
\begin{enumerate}
\item Initially, two grounded layers here have similar modularity values, contributing to the two local peaks of modularity, one at the right-bottom and the other at the left-top. 
\item (a): At iteration $t = 0$:, the base algorithm finds an approximate layer 2, whose NMI similarity with layer 2 is about 0.90. 
\item (b): After reducing that partition, the modularity local peak at the left-top sinks and the modularity peak at right-bottom rises, and the base algorithm finds an approximate layer 1 whose NMI similarity with layer 1 is about 0.89. ReduceEdge then reduces this approximated layer 1 and makes it easier to approximate layer 2. 
\item (c) and (d): At $t = 1$, the base algorithm finds an approximate layer 2 having 0.97 NMI similarity with layer 2, which is a significant improvement. As that more accurate approximation of layer 2 is reduced, the base algorithm is able to find a better approximation of layer 1 too. In our run, it finds an approximation that has 0.96 NMI similarity with layer 1. 
\item (e) and (f):  As HICODE iterates, at $t = 2$, the base algorithm is able to uncover an approximate layer 2 with 0.98 NMI similarity, and an approximate layer 1 with 0.97 NMI similarity.
\end{enumerate}

\section{Conclusion}
In this work, we provide a theoretical perspective on the hidden community detection meta-approach HICODE, on multi-layer stochastic block models. We prove that 
in synthetic two-layer stochastic blockmodel networks, the modularity of a layer will increase, after we apply a weakening method (RemoveEdge, ReduceEdge, or ReduceWeight) on all communities in the other layer, which boosts the detection of the current layer when the other layer is weakened.A simulation of relative modularity during iterations is also provided to illustrate on how HICODE weakening method works during the iterations.
Our work builds a solid theoretical base for HICODE, demonstrating that it is promising in uncovering both hidden and dominant layers of communities in two-layer stochastic block model networks. In future work, we will generalize the theoretical analysis to synthetic networks with more than two stochastic block model layers. 
\bibliographystyle{splncs04} 
\bibliography{mybib}

\newpage 
\section*{Appendix A: Procedure of HICODE algorithm}
The HIdden COmmunity DEtection (HICODE) algorithm takes in a base algorithm $\mathcal{A}$ that finds one disjoint partition of communities~\footnote{A set of lightly overlapping communities is also allowed for the base algorithm $\mathcal{A}$. Here we only consider the partition case for simplicity.} and uses $\mathcal{A}$ to \textbf{identify} and \textbf{refine} layers of community partitions. In the identification stage, HICODE iterates the following two steps until reaching a preset number of layers: 
\begin{enumerate}
\item \textbf{Identify: } Run $\mathcal{A}$ to find one disjoint partition of communities on network $G$ and consider the partition as one layer of communities, $l$;
\item \textbf{Weaken: } Approximate edges contributed by layer $l$ on $G$ and reduce these edges on $G$.
\end{enumerate}

HICODE then refines community partitions on each layer through iterating: 
\begin{enumerate}
\item \textbf{Weaken: } Approximate edges contributed by all layers except $l$ and reduce these edges on the original network $G$;
\item \textbf{Refine: } Run $\mathcal{A}$ on the remaining network to obtain a refined community partition for layer $l$.
\end{enumerate}

\section*{Appendix B: Detailed Proofs for two-layer SBM}
\setcounter{lemma}{0}
\label{appendix2layerproof}
\begin{lemma}
In the synthetic two-layer block model network $G(n, n_{1}, n_{2}, p_{1}, p_{2})$, for any  community in layer 1, the expected number of its internal edges, its outgoing edges, and layer 1's modularity are as follows:
\begin{align}
     e_{11} &=  \left(1-\frac{1}{n_{2}}\right)m_{1}p_{1} + \frac{1}{n_{2}}m_{1}p_{12}, \label{append:eq1}\\
    e_{1out} &=  \frac{p_{2}}{n_{2}} s_{1}(n - s_{1}), \label{append:eq2}\\
    Q_1 &= 1 - \frac{1}{n_1} - \frac{e_{1out}}{d_1}, \label{append:eq3}
\end{align}
where $p_{12} = p_1 + p_2 - p_1 \cdot p_2$. Symmetrically, given a community $i$ in layer 2, the expected number of its internal edges, its outgoing edges, and layer 2's modularity are as follows: 
\begin{align}
    e_{22} &= \left(1-\frac{1}{n_{1}}\right)m_{2}p_{2} + \frac{1}{n_{1}}m_{2}p_{12}, \label{append:eq4}\\
    e_{2out} &=  \frac{p_1}{n_1} s_{2}(n - s_{2}), \label{append:eq5}\\
    Q_2 &= 1 - \frac{1}{n_2} - \frac{e_{2out}}{d_2}. \label{append:eq6}
\end{align}
\end{lemma}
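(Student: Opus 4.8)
The plan is to compute every quantity by linearity of expectation, reducing each to a count of node pairs weighted by the appropriate edge probability. The starting observation is that, because the two layers are generated independently, the probability that a given pair $(u,v)$ forms an edge depends only on which communities the two nodes share: the edge appears whenever at least one of the two independent generating processes fires, so the probability is $p_1$ if they share only a layer-1 community, $p_2$ if only a layer-2 community, $p_1 + p_2 - p_1 p_2 = p_{12}$ if they share both, and $0$ if neither. The second ingredient, which is exactly where independence enters, is that for any fixed pair the probability of sharing a layer-2 community is $\frac{1}{n_2}$ (each node falls into one of the $n_2$ equal-sized layer-2 blocks uniformly and independently of the layer-1 structure), and symmetrically $\frac{1}{n_1}$ for sharing a layer-1 community.

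For the internal edges $e_{11}$ of a fixed layer-1 community $C$, I would split its $m_1 = \frac{1}{2}s_1^2$ internal node pairs (self-loops allowed) according to whether they additionally share a layer-2 community. By the per-pair probability above, an expected fraction $\frac{1}{n_2}$ of these pairs share a layer-2 block and hence carry edge probability $p_{12}$, while the remaining $(1 - \frac{1}{n_2})$ fraction carry probability $p_1$. Summing the two contributions gives $e_{11} = (1 - \frac{1}{n_2}) m_1 p_1 + \frac{1}{n_2} m_1 p_{12}$, which is Eq.~(\ref{append:eq1}).

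For the outgoing edges $e_{1out}$, the key simplification is that any pair with one endpoint in $C$ and one outside necessarily lies in distinct layer-1 communities, so such a pair can become an edge only through layer 2. There are $s_1(n - s_1)$ such pairs, and each is an edge with probability $\frac{1}{n_2} \cdot p_2$ (share a layer-2 block with probability $\frac{1}{n_2}$, then the layer-2 process fires with probability $p_2$), so $e_{1out} = \frac{p_2}{n_2} s_1(n - s_1)$, which is Eq.~(\ref{append:eq2}). For the modularity $Q_1$ I would exploit that layer 1 partitions all of $V$, so $\sum_i d_1^i = 2e$; since the communities are equal-sized in expectation this gives $d_1 = 2e/n_1$ and hence $\frac{d_1}{2e} = \frac{1}{n_1}$. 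Substituting into $Q_1 = \sum_i [\frac{e_{11}^i}{e} - (\frac{d_1^i}{2e})^2] = \frac{n_1 e_{11}}{e} - \frac{1}{n_1}$ and rewriting $\frac{n_1 e_{11}}{e} = \frac{2 e_{11}}{d_1} = 1 - \frac{e_{1out}}{d_1}$ (using $d_1 = 2 e_{11} + e_{1out}$) collapses the expression to $Q_1 = 1 - \frac{1}{n_1} - \frac{e_{1out}}{d_1}$, which is Eq.~(\ref{append:eq3}). The layer-2 formulas follow verbatim by swapping the roles of the two layers.

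The only genuine obstacle is justifying the per-pair sharing probability $\frac{1}{n_2}$ cleanly: strictly it equals $\frac{s_2 - 1}{n - 1}$ for a uniform equal-sized partition, and the clean factorization of $e_{11}$ into a shared-both part of size exactly $\frac{m_1}{n_2}$ relies on treating each layer-1 community as splitting evenly across the $n_2$ layer-2 blocks, i.e. each intersection taking its expected size $r = \frac{n}{n_1 n_2}$. I would lean on the model's idealized assumption that intersections attain their expected size $r$, so that linearity of expectation turns each count into the stated closed form. Everything else is routine bookkeeping, the main points of care being the self-loop convention $m_1 = \frac{1}{2} s_1^2$ and counting each across-boundary pair exactly once.
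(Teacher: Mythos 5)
Your proposal is correct and follows essentially the same route as the paper's proof: the same split of the $m_1$ internal pairs into a $\frac{1}{n_2}$ fraction with edge probability $p_{12}$ and the rest with $p_1$, the same count $\frac{p_2}{n_2}s_1(n-s_1)$ for outgoing pairs, and the same algebraic reduction of $Q_1$ via $e = \frac{1}{2}n_1 d_1$ and $d_1 = 2e_{11} + e_{1out}$. Your closing caveat about $\frac{1}{n_2}$ versus $\frac{s_2-1}{n-1}$ is handled in the paper exactly as you suggest, by treating each intersection block as having its expected size $r = \frac{n}{n_1 n_2}$ (with self-pairs included via $m_1 = \frac{1}{2}s_1^2$), so nothing is missing.
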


\begin{proof}
  
All communities in one layer are of equal size, so for any community $i$ in a fixed layer $l$, the probability that a node belongs to $i$ is $\frac{1}{n_l}$. In addition, layers are independent, so for any pair of community $i$ in layer 1, $j$ in layer 2, the probability of a node belonging to both $i$ and $j$ is $\frac{1}{n_{1}n_{2}}$. So the expected number of nodes in the intersection of community $i$ and $j$ is $r = \frac{n}{n_{1}n_{2}}$. 

Denote the intersection block of community $i,j$ as $b_{ij}$. $b_{ij}$ has $r = \frac{n}{n_{1}n_{2}}$ nodes, and thus $m_{b_{ij}} = \frac{1}{2}r^2$ node pairs. For any community $i$ in layer 1, there are $n_2$ communities in layer 2 that $i$ can intersect with, and they are disjoint, so the expected number of node pairs that are internal to both $i$ and some community in layer 2 is $n_2 m_{b_{ij}}$. Since $r = \frac{n}{n_1 \cdot n_2}$, $s_1 = \frac{n}{n_1}$, $r = \frac{s_1}{n_2}$,
\begin{align*}
& m_{b_{ij}} = \frac{1}{2} r^2 = \frac{1}{n_2^2} \cdot \frac{1}{2} \cdot s_1^2 =  \frac{1}{n_2^2} m_1 \\
\implies & n_2 m_{b_{ij}} = \frac{1}{n_2} m_1.
\end{align*}

The equation indicates that for  community $i$ in layer 1, $\frac{1}{n_2} \cdot  m_1$ node pairs in layer 1 are also in the same community of layer 2. While the rest  $(1 - \frac{1}{n_2})m_1$ node pairs in $i$ form edges with probability $p_1$, those $\frac{1}{n_2} \cdot  m_1$ node pairs in the intersection form edges with probability $p_{12} = p_1 + p_2 - p_1 \cdot p_2$. Thus, the number of internal edges in any community of layer 1 is 
\[e_{11} = (1 - \frac{1}{n_2})m_1p_1 + \frac{1}{n_2}m_1p_{12}. \]
This completes the proof for Eq. \ref{append:eq1}. 

The probability that a node pair is internal in layer $2$ is $\frac{1}{n_2}$, so the number of nodes pairs outgoing from community $i$ of layer 1  that also happens to be internal in layer 2 is:  
\[ \frac{1}{n_2} \cdot
    \# \text{ of nodes pairs outgoing from $i$}  =
    \frac{1}{n_2} \cdot s_1(n -s_1). \]
Thus, the expected number of outgoing edges from community $i$ is:  
\begin{align*}
    e_{1out} &= p_2 \cdot \# \text{ of nodes pairs outgoing from $i$ that is internal to layer 2}\\
    &= \frac{p_2}{n_2} \cdot s_1(n - s_1).
\end{align*}
This completes the proof for Eq. \ref{append:eq2}. 

Also, the total number of edges, denoted as $e$, equals a half of the degree sum of all nodes, 
\begin{align*}
e = \frac{1}{2} \sum_{i \in layer\ l} d_l = \frac{1}{2} n_l \cdot d_l.
\end{align*}
Therefore, the modularity $Q^i_1$ of any community $i$ in layer 1 is 
\begin{align*}
    Q^i_1 = &\frac{e_{11}}{e} - \left(\frac{d_1}{2e} \right)^2 = \frac{2e_{11}}{n_1d_1} - \left(\frac{d_1}{n_1d_1}\right)^2
    = \frac{2e_{11}}{n_1 d_1} - \frac{1}{(n_1)^2}.
\end{align*}
Thus, the modularity of layer 1 is simply 
\begin{align*}
    Q_1 &= \sum_{i \in layer1} Q_1^i  = n_1 \cdot \left(\frac{2e_{11}}{n_1 d_1} - \frac{1}{(n_1)^2} \right) \\
    &= \frac{ 2e_{11} }{ d_1 } - \frac{1}{n_1} = 1 - \frac{1}{n_1} - \frac{e_{1out}}{d_1},
\end{align*}
where the last equation follows from $d_l = 2 e_{ll} + e_{lout}$. This completes the proof for Eq. \ref{append:eq3}. 

The proof for Eq. \ref{append:eq4}, \ref{append:eq5}, \ref{append:eq6} are analogous.
\end{proof}

\begin{lemma}
For layer $l$ in a two-layer stochastic blockmodel, if the layer weakening method (eg. RemoveEdge, ReduceEdge, ReduceWeight) reduces more percentage of outgoing edges than internal edges, i.e. the expected number of internal and outgoing edges after weakening $e'_{ll}, e'_{l out}$ satisfies $\frac{e'_{l out}}{e_{l out}} < \frac{e'_{l l}}{e_{l l}}$, then the modularity of layer $l$ increases after the weakening method.
\end{lemma}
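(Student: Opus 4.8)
The plan is to reduce everything to the closed-form modularity supplied by Lemma 1, namely $Q_l = 1 - \frac{1}{n_l} - \frac{e_{lout}}{d_l}$ with $d_l = 2e_{ll} + e_{lout}$. The number of planted communities $n_l$ is unchanged by any of the weakening operations (they only delete or reweight edges, never merge or split communities), so the term $1 - \frac{1}{n_l}$ is invariant. Hence proving $Q'_l > Q_l$ is equivalent to proving that the penalty term strictly decreases, i.e.\ $\frac{e'_{lout}}{2e'_{ll} + e'_{lout}} < \frac{e_{lout}}{2e_{ll} + e_{lout}}$.

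First I would rewrite this penalty as a function of the single ratio $\rho := e_{lout}/e_{ll}$. Dividing numerator and denominator by $e_{ll}$ gives $\frac{e_{lout}}{2e_{ll}+e_{lout}} = \frac{\rho}{2 + \rho} = 1 - \frac{2}{2+\rho}$, which is strictly increasing in $\rho$. Thus it suffices to show that the weakening strictly decreases this ratio, i.e.\ $\rho' := e'_{lout}/e'_{ll} < e_{lout}/e_{ll} =: \rho$.

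Second I would translate the hypothesis into exactly this statement. Since all four edge counts are nonnegative and both $e_{ll}, e_{lout}$ are positive before weakening, the assumed inequality $\frac{e'_{lout}}{e_{lout}} < \frac{e'_{ll}}{e_{ll}}$ cross-multiplies to $e'_{lout}\, e_{ll} < e'_{ll}\, e_{lout}$, and dividing through by the positive product $e_{ll}\, e'_{ll}$ yields $\rho' < \rho$. Combined with the monotonicity of $\rho \mapsto \rho/(2+\rho)$, this gives the desired strict decrease of the penalty term and therefore $Q'_l > Q_l$.

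I do not expect a serious obstacle here: once Lemma 1 is in hand the claim collapses to a one-variable monotonicity argument. The only points requiring care are the edge cases that make the divisions legitimate --- in particular the \emph{RemoveEdge} situation where $e'_{lout} = 0$, for which $\rho' = 0 < \rho$ holds directly and the penalty vanishes --- and verifying that the weakening operations genuinely leave $n_l$ (and hence $1 - 1/n_l$) fixed, so that the comparison cleanly isolates the penalty term.
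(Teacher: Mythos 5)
Your proposal is correct and takes essentially the same route as the paper: both start from Lemma 1's closed form $Q_l = 1 - \frac{1}{n_l} - \frac{e_{lout}}{d_l}$ and reduce the claim to showing the penalty term $\frac{e_{lout}}{2e_{ll}+e_{lout}}$ strictly decreases, which is exactly the monotonicity in $\rho = e_{lout}/e_{ll}$ that the paper's chain of equivalences (via $\frac{2e_{ll}}{e_{lout}}+1 < \frac{2e'_{ll}}{e'_{lout}}+1$) encodes in reciprocal form. Your explicit handling of the $e'_{lout}=0$ case (RemoveEdge) is in fact slightly more careful than the paper, whose first equivalence divides by $e'_{lout}$ and thus tacitly assumes it positive.
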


\begin{proof}
From Lemma 1, the modularity of layer $l$ before the layer weakening is $Q_l =  1 - \frac{1}{n_l} - \frac{e_{l out}}{d_l}$, becomes $Q'_l = 1 - \frac{l}{n_l} - \frac{e'_{lout}}{d_l'}$ after weakening. The number of edges must be non-negative, so we can assume that $e_{ll}, e_{lout}, e'_{ll}$ are positive, and then
\begin{align*}
    \frac{e'_{l out}}{e_{l out}} < \frac{e'_{l l}}{e_{l l}} &\iff \frac{2 e_{l l}}{e_{l out}} + 1 < \frac{2 e'_{l l}}{e'_{l out}} + 1 \\
    &\iff \frac{e_{l out}}{2 e_{l l} + e_{l out}}  > \frac{e'_{l out}}{2 e'_{l l} + e'_{l out}}\\
    &\iff  \frac{e_{l out}}{d_l}  > \frac{e'_{l out}}{d'_l} 
    \\
    & \implies 1 - \frac{l}{n_l} - \frac{e_{lout}}{d'_l} < 
    1 - \frac{l}{n_l} - \frac{e'_{lout}}{d_l'} \\
    &\implies Q_l < Q'_l.
\end{align*}
Therefore, $\frac{e'_{l out}}{e_{l out}} < \frac{e'_{l l}}{e_{l l}}  \implies Q_l < Q'_l$.
\end{proof}

\begin{lemma}
In $G(n, n_1, n_2, p_1, p_2)$, before any weakening procedure. 
\begin{align*}
&e_{11} = \frac{|S_{12}| + |S_1|}{ n_1}, &e_{1out} = \frac{2}{n_1} |S_2|,\\
&e_{22} = \frac{|S_{12}| + |S_2|}{ n_2}, &e_{2out} = \frac{2}{n_2} |S_1|.
\end{align*}
\end{lemma}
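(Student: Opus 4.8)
The plan is to establish four deterministic counting identities that hold for every realization of the graph, and then pass to expectations using the symmetry among the equal-sized communities within each layer. The natural starting point is to classify every potential edge according to the joint membership of its two endpoints: a node pair is internal to both layers (contributing to $S_{12}$ if an edge forms), internal to layer 1 only (contributing to $S_1$), internal to layer 2 only (contributing to $S_2$), or internal to neither layer, in which case the ideal two-layer model assigns it edge probability $0$. This last observation --- that no edge can join two nodes sharing no community in either layer --- is the structural fact I would isolate first, since it is exactly what makes the outgoing-edge bookkeeping clean.

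For the two internal-edge identities, I would argue that an edge counted inside some layer-1 community has both endpoints in the same layer-1 community, so it is either simultaneously internal to layer 2 (an $S_{12}$ edge) or not (an $S_1$ edge); these cases are disjoint and exhaustive. Summing the per-community internal counts over all $n_1$ layer-1 communities therefore yields exactly $|S_1| + |S_{12}|$ with no double counting, because each internal edge lies inside a unique community. Dividing by $n_1$ and using that every community in layer 1 has the same expected internal count $e_{11}$ gives $e_{11} = (|S_1| + |S_{12}|)/n_1$, and the symmetric argument gives $e_{22}$.

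For the outgoing-edge identities, the key step is to observe that an outgoing edge of a layer-1 community has its two endpoints in distinct layer-1 communities, so by the structural fact above it can exist only if those endpoints share a layer-2 community --- i.e., it is precisely an $S_2$ edge. The subtlety here, and the step I expect to be most error-prone, is the factor of two: when I sum outgoing counts over all $n_1$ layer-1 communities, each $S_2$ edge is tallied once at each of its two endpoints' (distinct) communities, so the total is $2|S_2|$ rather than $|S_2|$. Dividing by $n_1$ then yields $e_{1out} = 2|S_2|/n_1$, and symmetrically $e_{2out} = 2|S_1|/n_2$.

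Finally, I would note that all four identities already hold per realization once $|S_1|, |S_2|, |S_{12}|$ are read as the actual edge-set sizes, so taking expectations and invoking the equal expected internal (resp. outgoing) count across communities in a fixed layer is all that remains. As a cross-check I would verify the formulas directly against Lemma 1 by substituting the expected sizes $\mathbb{E}[|S_{12}|] = \frac{n_1 m_1}{n_2} p_{12}$ and $\mathbb{E}[|S_1|] = n_1\left(1 - \frac{1}{n_2}\right) m_1 p_1$ together with their layer-2 analogues; this recovers $e_{11}$ immediately and, after writing $m_2 = \frac{1}{2}s_2^2$ and $s_2 = n/n_2$, reproduces $e_{1out} = \frac{p_2}{n_2}s_1(n - s_1)$. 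No genuinely hard estimate is involved: the entire difficulty lies in getting the edge classification and the double-counting factor right.
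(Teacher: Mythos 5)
Your proof is correct and takes essentially the same route as the paper's: both arguments classify each edge by joint layer membership, note that the outgoing edges of layer-1 communities are exactly the $S_2$ edges while the internal ones split disjointly into $S_1$ and $S_{12}$, and obtain the factor of $2$ from each outgoing edge being counted at both of its endpoints' (distinct) communities. Your per-realization framing followed by taking expectations, and the closing cross-check against Lemma 1, are harmless additions on top of the paper's bookkeeping.
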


\begin{proof}
In our two-layer stochastic block model, 
any outgoing edge of a community in layer 1 is internal to layer 2, and by definition, they are not internal to layer 1, Thus, the set of outgoing edges of communities in layer 1 is exactly the set of edges only internal to layer 2, i.e. $S_2$. There are $n_1$ communities in layer 1, each expected to have $e_{1out}$ outgoing degrees. Each edge contributes to 2 degrees, so the expected number of outgoing edges of all communities in layer 1 is $\frac{1}{2} n_1 \cdot e_{1out} $. Thus, $|S_2| = \frac{1}{2} n_1 \cdot e_{1out} $, which implies $e_{1out} = \frac{2}{n_1} |S_2|$. The proof for $e_{2out} = \frac{2}{n_2} |S_1|$ is analogous.

Any edge that is only internal to layer 1, or internal to both layer 1 and 2 is internal to exactly one community in layer 1. Thus, the set of edges in a community $i$ of layer 1 is exactly the union of $S_1$ and $S_{12}$. $S_1$ and $S_{12}$ are disjoint, so their union has size $|S_1| + |S_{12}|$. Therefore $n_1 \cdot e_{11} = |S_1| + |S_{12}|$, and $e_{11} = \frac{1}{n_1} ( |S_1| + |S_{12}|)$. The proof for $e_{22} = \frac{1}{n_2} ( |S_2| + |S_{12}|)$ is analogous. 
\end{proof}

\section*{Appendix C: More Simulation of Relative Modularity}

In this section, we provide the definition of NMI similarity for two partitions, and illustrate the simulation for another two weakening methods, RemoveEdge and ReduceWeight. In Fig. \ref{fig:RemoveEdge} and \ref{fig:ReduceWeight}, we see that both methods give results similar to ReduceWeight. The three weakening methods all boost the detection on dominant layer (layer 2) and hidden layer (layer 1), and converge in three iterations. 

\begin{definition}[NMI similarity]
Normalized mutual information (NMI) of two partitions $X,Y$ is defined to be
\begin{align*}
    NMI(X,Y) = \frac{2 I(X,Y)}{ H(X) + H(Y)}
\end{align*}
where $H(X)$ is the entropy of partition with $p(x)$ taken to be $|X|$
 \begin{align*}
     H(X) = - \sum_{x \in X} p(x) \log p(x)
     = - \sum_{x \in X} |x| \log |x|
 \end{align*}
 and $I(X,Y)$ measures the mutual information between $X$ and $Y$ by
\begin{align*}
    I(X,Y) &= \sum_{x \in X} \sum_{y \in Y} p(x,y) \log \frac{p(x,y)}{p(x) \cdot p(y)}\\
    &= \sum_{x \in X} \sum_{y \in Y} |x \cap y| \log \frac{ |x \cap y| }{|x| \cdot |y|}
\end{align*}
\end{definition}

\begin{figure}[htbp]
	\centering
	\includegraphics[width = \textwidth]{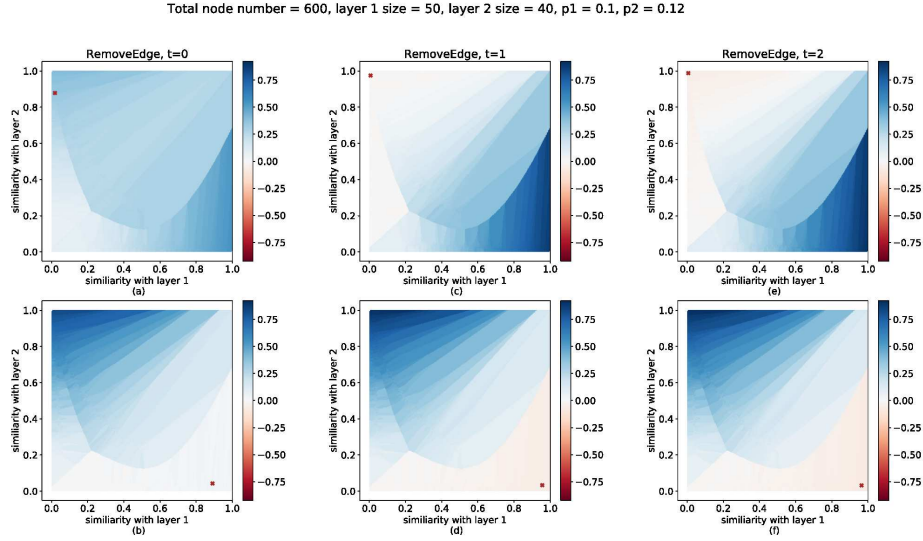}
    \caption{Simulation results of RemoveEdge on $G(600, 15, 12, 0.1, 0.12)$.}
    \label{fig:RemoveEdge}
\end{figure}

\begin{figure}[htbp]
	\centering
	\includegraphics[width = \textwidth]{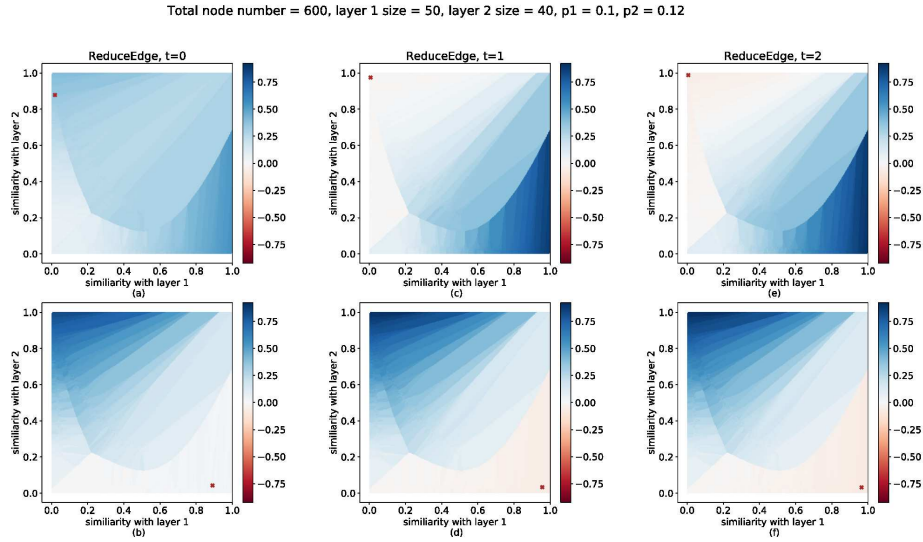}
    \caption{Simulation results of ReduceEdge on $G(600, 15, 12, 0.1, 0.12)$. The initial weight of each edge is set to 1.}
    \label{fig:ReduceWeight}
\end{figure}

\end{document}